\title{\LARGE \bf Redundancy Scheduling in Systems with Bi-Modal \\Job Service Time Distributions} 
\author{Amir Behrouzi-Far and Emina Soljanin$^{1}$
\thanks{$^{1}$The authors are with the Department of Electrical and Computer Engineering,
        Rutgers University, the State University of New Jersey, USA
        {\tt\small \{amir.behrouzifar,emina.soljanin\}@rutgers.edu}.}%
}
\newtheorem{theorem}{Theorem}
\newtheorem{definition}{Definition}
\newtheorem{lemma}{Lemma}
\newtheorem{proposition}{Proposition}
\begin{document}

\maketitle

\begin{abstract}
Queuing systems with redundant requests have drawn great attention because of their promise to reduce the job completion time and variability. Despite a large body of work on the topic, we are still far from fully understanding the benefits of redundancy in practice. We here take one step towards practical systems by studying queuing systems with bi-modal job service time distribution. Such distributions have been observed in practice, as can be seen in, e.g., Google cluster traces.
We develop an analogy to a classical urns and balls problem, and use it to study the queuing time performance of two non-adaptive classical scheduling policies: random and round-robin. We introduce new performance indicators in the analogous model, and argue that they are good predictors of the queuing time in non-adaptive scheduling policies. We then propose a non-adaptive scheduling policy that is based on combinatorial designs, and show that it has better performance indicators. Simulations confirm that the proposed scheduling policy, as the performance indicators suggest, reduces the queuing times compared to random and round-robin scheduling. 
\end{abstract}{}

\section{Introduction}
In distributed computing/storage systems, redundancy plays an important role in harnessing the mean sojourn time of jobs, \cite{dean2013tail} and \cite{joshi2014delay}. It is widely studied in theory \cite{joshi2017efficient,lee2017speeding}, and employed in practice, \cite{he2010comet,bernardin2006using}. Redundancy is particularly beneficial in systems with high variability in job service times \cite{gardner2017redundancy}.

In systems with multiple servers, each with its own queue, submitting redundant copies of an arriving job to multiple servers brings twofold benefit. First, by waiting in multiple queues, jobs get to service faster (on average) than the no-redundancy scenario. This happens because redundancy brings \textit{diversity} to queuing time of the jobs, which on average decreases the waiting time in the queue since some copies of each job face shorter queues than the others. The second benefit arises when jobs experience high variability in their service time. This problem, which is caused by "stragglers" \cite{ananthanarayanan2013effective}, can significantly increase the average service time and its variability \cite{aktas2019straggler}. By letting jobs be served concurrently at multiple servers, jobs experience the minimum service time across the servers \cite{behrouzi2018effect}. 

Two scenarios have been proposed for treating redundancy in distributed systems. In the first scenario, \cite{joshi2017efficient} and \cite{raaijmakers2018delta}, copies of an arriving job are submitted to multiple servers and the redundant copies get cancelled once the first copy \textit{starts} service. The first copy of a job starting the service is the one which faces the least-work-left queue among all the copies. This scenario is beneficial when the variability in queuing times is high. For instance, systems with bi-modal or multi-modal service time of jobs could benefit from it. Note that, in this scenario redundancy incurs no extra service cost, since only one copy of a job gets into service. In a second scenario, \cite{gardner2017redundancy}, on the other hand, copies of a job may get into service concurrently and the redundant copies get cancelled only when the first copy \textit{finishes} service. The gain of this scenario is that it can reduce the variability in the life time of the jobs, i.e, queuing time plus service time. As an example, systems with heterogeneous servers or homogeneous servers with non-careful resource sharing could benefit from this scenario.

Job scheduling problems in distributed systems has been under study since the emergence of these systems \cite{wang1985load}. This problem has been studied with different objectives, e.g., load balancing \cite{harchol1996exploiting}, fairness \cite{bansal2001analysis}, minimizing the average/variability of job life time in the system \cite{wierman2005classifying}, maximizing resource utilization \cite{peng2015r}. Since these objectives are not always aligned, studies have been devoted to designing scheduling policies that can provide a reasonable trade-off \cite{bansal2001analysis}. In the case of jobs with redundancy, a related line of work looks into how many servers should be used for job replication, \cite{gardner2017redundancy} and \cite{aktas2019learning}. In general, scheduling policies can be categorized into \textit{adaptive} and \textit{non-adaptive} policies. While adaptive policies, e.g., join the shortest queue \cite{gupta2007analysis} or power-of-d choices \cite{gardner2017redundancy}, use information of queues' length/servers' workload to pick servers for arriving jobs, non-adaptive policies, e.g., random or round robin \cite{harchol1998choosing}, make a blind decision. In systems with redundancy, multiple servers have to be selected, and thus the problem could be exponentially harder in a system with redundancy than in its no-redundancy counterpart. 
The role of redundancy to \textit{diversify} the queuing/service time also adds a new dimension to the scheduling problem, and the classical scheduling policies, designed for no-redundancy systems, may not have the expected performance in systems with redundancy.

In this work, we consider a distributed system with $n$ servers, each with its own queue. Job service times follow a bi-modal behavior: an arriving job (on average) has either a \textit{short} or \textit{long} service time. The service requirement of a job is not known upon arrival. This model has been observed in practice; see e.g, Google trace data \cite{chen2010analysis}. Upon arrival, $r$ copies of each job $(r\leq n)$ get scheduled in $r$ different servers. The service time of each copy is sampled from a \textit{fast} exponential distribution if the job is short or a \textit{slow} exponential distribution if the job is long. With bi-modal jobs' service time, there will be a high queuing time variability across the copies of a job. Nevertheless, since the exponential distribution is not heavy-tailed, a job will not face high variability in service time across its redundant copies. Accordingly, we consider the first scenario for handling redundancy, where the redundant copies of a job get cancelled as soon as the first copy starts the service. In this setup, by developing an analogy to the classical urns and balls problem, we study the performance of two well known non-adaptive scheduling policies, random and round-robin. We then propose a new non-adaptive scheduling policy based on combinatorial block designs which shares the positive capabilities of both random and round-robin policies without having their negative properties. Then, through simulation of the queuing system, we show that the new scheduling policy outperforms the existing non-adaptive policies by a considerable margin.

This paper is organized as follows. In Sec.~\ref{sysModel_probStat}, we present the queuing system model. In Sec.~\ref{urnsBallsAnalog}, we develop an urns and balls analogy to our queuing problem, and define and analyze there performance indicators in the analogous system. In Sec.~\ref{simResult}, we analyse our queuing system with different scheduling policies by simulation, and show that its performance is indeed well predicted by the performance indicators of the analogous urns and balls system.

\section{System Model and Problem Statement}\label{sysModel_probStat}

\subsection{System Architecture and Service model}
In our studied system, shown in Fig. \ref{fig:sysModel},  there are $n$ identical servers. Upon arrival, $r$ copies of a job get scheduled into $r$ servers and once the first copy enters service the redundant copies get cancelled. The service time of each copy of a job is sampled from a fast exponential distribution if the job is short and from a slow exponential distribution if the job is long. Let's define the exponential random variable $\tau$, with rate $\mu_1$, to be the service time of short jobs and
    \begin{equation}
        q=\frac{\textup{average service time of long jobs}}{\textup{average service time of short jobs}}.
        \label{q}
    \end{equation}{}
The average service time of long jobs is the random variable $q\tau$, which is an exponential with rate $\mu_1/q$. This model for scaling the service time distribution is also proposed in \cite{gardner2017better}. We model the frequency of arriving a long job by $p\in[0,1]$, such that an arriving job is a long one with probability $p$ and a short one with probability $1-p$.

    \begin{figure}[t]
        \centering
        \includegraphics[width=7.8cm]{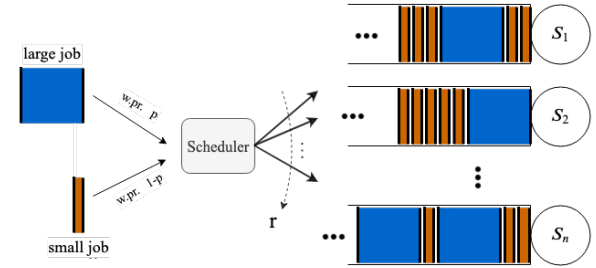}
        \caption{System model.}
        \label{fig:sysModel}
    \end{figure}
\subsection{Scheduling Policies}
We consider non-adaptive scheduling policies, where servers are selected for arriving jobs with no side knowledge of the workloads on servers. We are interested in the effect of scheduling policies on the queuing time of jobs, defined as the waiting time in the queue experienced by the first copy that starts service. Among non-adaptive policies, we specially focus on random and round-robin, due to their popularity both in theory and practice. With random policy, for each arriving job $r$ servers are randomly selected and one copy gets assigned to each server. Round-robin policy on the other hand takes a more structured approach, where it assigns $j$th copy of $i$th arriving job to the server indexed by $((i-1)r+j)\mod n$ for $j=1,2,\dots,r$. We then propose a new scheduling policy based on combinatorial block designs, which is also a non-adaptive policy, and show how it reduces the average queuing time by managing the \textit{overlaps} between set of selected servers across the jobs.

For scheduling jobs with bi-modal service time redundantly, the following two considerations have to be taken into account. First, load balancing should not be compromised, in the sense that the amount of actual workload assigned to the servers should be balanced. Note that, load balancing has been the central objective for designing schedulers in queuing systems. Nevertheless, we have observed that, with bi-modal jobs' service time behavior, the following objective should also be considered while designing a scheduling policy: \textit{the set of servers assigned to a job should make as less overlaps as possible with the set of servers assigned to the existing jobs in the system}. To understand the importance of the second objective, consider the case where a small job arrives and, among others, there is a large job waiting in $r$ queues. If the set of servers assigned to the small job completely overlaps with the servers that are already assigned to the large job then the small job is trapped behind the large one in every queue its assigned to. Therefore, it will definitely experience a long service time while waiting in the queue. Whereas, if the sets of assigned servers has a comparatively few overlaps, then the small job may find a queue in which it has to wait much less than the queues in which it has overlap with a large job. Therefore, with bi-modal jobs' service time behavior, a scheduler should be able to minimize overlaps between the set of assigned servers to jobs. In other words, it should be able to provide enough diversity in queuing times across the redundant copies of every arriving job.

To improve load balancing, a round-robin assignment of jobs may be helpful. This policy achieves the best expected balance in the loads assigned to servers. However, in terms of diversity of redundancy it is inferior to the other policies we study here, the reason of which will be quantified in Section \ref{urnsBallsAnalog}. Random policy, on the other hand, performs better in terms of diversifying redundancy but it is not as effective as round-robin in terms of load balancing.

\subsection{Problem Statement}
Analyzing queuing time with multiple parallel servers and general distribution of service time of jobs is known to be a hard problem and only approximations, based on first and second moment of service time distribution, have been proposed, e.g \cite{hokstad1978approximations,kimura1983diffusion}. Nevertheless, \cite{gupta2010inapproximability} shows that even these approximations are not accurate, since the queuing time in a queuing system varies with the third moment of service time distribution quite drastically \cite[Fig.1]{gupta2010inapproximability}. Queuing systems with redundancy, on the other hand, are studied in literature, e.g. \cite{gardner2017better,gardner2015reducing,raaijmakers2018delta,joshi2017efficient}. With redundancy, two scenarios for the cancellation of redundant copies of a jobs have been studied; cancellation after the first copy starts service, \cite{joshi2017efficient,raaijmakers2018delta}, and cancellation after the first copy finishes service, \cite{gardner2015reducing,gardner2017better}. As stated earlier, we consider the former scenario in this work.

The metric of interest for us is the queuing time in systems with redundancy. We define the queuing time of a job as the time interval between the arrival of the job and the entrance of its first copy to service. Analyzing the queuing time with the aforementioned model turns out to be a hard queuing problem \cite{raaijmakers2018delta}. To develop insight about this problem we propose the following analogy to the classical urns and balls problem.

\section{Performance Analysis}\label{urnsBallsAnalog}
\subsection{An Urns and Balls Analogy}
Let's consider the \textit{only-arrival} system, where servers do not run and jobs arrive and get scheduled to the queues of servers, according to an arbitrary scheduling policy. Note that we are interested in studying the performance of scheduling policies, in terms of queuing time. But as we mentioned earlier, the queuing time in a system with redundancy and bi-modal jobs' service time distribution, is determined by the scheduling policy's capability of providing load balancing and diversity of redundancy in the set of assigned servers across the jobs. We believe that a scheduling policy's capability of improving these indicators in the \textit{only-arrival} system is a good indicator of it's performance in the actual queuing system.

The characteristics of queues in the \textit{only-arrival} system can be studied with classical urns and balls problem, as follows. For $n$ urns and a given parameter $r$, such that $1\leq r\leq n$,  let's define $experiment 1$ as drawing $r$ urns, according to an arbitrary (scheduling) policy and without replacement from the set of $n$ urns, and placing a ball into each one. It is easy to see that the statistics of the occupancy of urns after repeating $experiment 1$ for $T$ times is same as the statistics of  the queues' length in the \textit{only-arrival} system after arriving $T$ jobs. In the rest of this paper we assume $n|T$. For a given set of parameters, i.e. $n$, $r$ and $T$, statistics of occupancy of the urns only depends on the (scheduling) policy, of how to select $r$ urns out of $n$. We use the term \textit{scheduling} for choosing urns due to its analogy to scheduling jobs into servers. In what follows, we will define the performance indicators for load balancing and diversity of redundancy. Then we will discuss and compare non-adaptive scheduling policies in terms of the defined indicators.

\subsection{Performance Indicators}
Let's define $N_i^T$, $i=1,2,\dots,n$, as the random variable for the number of balls in urn $i$ after $T$ repetitions of $experiment 1$. We also define the order statistics of $\{N_i\}_{i=1}^{n}$, as $N_{k:n}^{T}$, $k=1,2,\dots,n$.

\begin{definition}
The Load Balancing Factor (LBF) of a policy is defined as the ratio of the average number of balls in the minimum loaded urn to the average number of balls in the maximum loaded urn, when the policy is applied for drawing urns in $T$ repetitions of $experiment 1$,
    \begin{equation}
        \textup{LBF}_{policy}\coloneqq\frac{\mathbb{E}\left[N_{1:n}^T\right]}{\mathbb{E}\left[N_{n:n}^T\right]}.
    \label{def:LBF}
    \end{equation}{}
\end{definition}
Let's call that the set of chosen urns in the first repetition of $experiment1$ the \textit{initial set}. We define random variable $X$, as the number of overlaps in the set of chosen urns in an arbitrary repetition of $experiment 1$ with the initial set. The following definitions are based on the first and the second moment of this random variable. In the rest of this paper, we consider the case where $T\geq\lceil\frac{n}{r}\rceil$, in order for $X$ to have positive moments.

\begin{definition}
The Redundancy Overlap Factor (ROF) of a specific policy is defined as the inverse of the first moment of random variable X, 
    \begin{equation}
        \textup{ROF}_{policy}\coloneqq\frac{1}{\mathbb{E}[X]}.
    \label{def:ROF}
    \end{equation}{}
\end{definition}{}

\begin{definition}
The Redundancy Diversity Factor (RDF) of a specific policy is defined as the inverse of the second moment of positive random variable X,
    \begin{equation}
        \textup{RDF}_{policy}=\frac{1}{\mathbb{E}[X^2]}.
    \label{def:RDF}
    \end{equation}{}
\end{definition}{}

The importance of ROF and RDF can be explained as follows. The reason for introducing redundancy in a distributed system is to diversify the queues a job is assigned to, which in turn (on average) reduces the chance of all copies of the job to be stocked in a long queue. Therefore, especially in systems with bi-modal service time of jobs, overlaps between the set of (redundant) copies of consequent jobs could reduce or even nullify the benefit of redundancy. Hence, with lower average overlap, a policy makes a higher utilization of the redundant requests. On the other hand, between two policies with the same first moment, the one which has lower second moment is preferred, because the policy with larger second moment makes small overlaps in some repetitions and large overlaps in some others. With this behavior, the potential diversity that could be achieved by redundancy wont be leveraged. Thus, with keeping a moderate overlap in all repetitions of the $experiment1$, a policy with lower second moment gets the better out of the redundant requests. Next, we will study these indicators for different scheduling policies. Note that, greater values are desirable for all of LBF, ROF and RDF.

\subsection{Random Scheduling}
With this policy, each arriving job gets scheduled to $r$ servers, chosen uniformly at random without replacement, from the set of $n$ servers. under the urns and balls analogy, at each repetition of $experiment 1$, $r$ urns get selected randomly from the set of $n$ urns. With $r=1$, the problem boils down to the classical urns and balls problem where several analytical results exist for the occupancy of the urns, e.g. \cite{raab1998balls,dubhashi1998balls}. However, with $r>1$ there are very few works addressing the occupancy problem. Recently, the author of \cite{michelen2019short}, inspired by \cite{behrouzi2019average}, proposed asymptotic results for the number of balls in the maximum loaded bin, when $n\rightarrow\infty$. We adopted the following lemma from \cite{michelen2019short}.

\begin{lemma}
The load balancing factor, after $T$ repetitions of experiment 1, with random selection of urns can be asymptotically approximated by,
    \begin{equation}
        \underset{n\rightarrow\infty}{\textup{lim}}\textup{LBF}_{rand}=\textup{max}\left\{0,\frac{\frac{Tr}{n}-\sqrt{\frac{2Tr(n-r)\textup{log}(n)}{n^2}}}{\frac{Tr}{n}+\sqrt{\frac{2Tr(n-r)\textup{log}(n)}{n^2}}}\right\}.
    \end{equation}{}
\end{lemma}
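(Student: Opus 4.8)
The plan is to start from the definition of $\textup{LBF}_{rand}$ in \eqref{def:LBF} and estimate the expected occupancies of the maximum- and minimum-loaded urns separately, then take their ratio. First I would record the marginal law of a single urn's count. Writing $N_i^T=\sum_{t=1}^{T}\mathbf{1}[\text{urn } i \text{ chosen in repetition } t]$, each repetition selects urn $i$ with probability $r/n$, and the repetitions are mutually independent, so $N_i^T$ is marginally $\textup{Binomial}(T,r/n)$ with mean $\mu=Tr/n$ and variance $\sigma^2=Tr(n-r)/n^2$. The quantity $\sqrt{2Tr(n-r)\log(n)/n^2}$ appearing in the statement is then exactly $\sigma\sqrt{2\log n}$, which is the signature of Gaussian extreme-value scaling.

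Next I would invoke the extreme-value behaviour of these (nearly Gaussian) counts. By the CLT each $N_i^T$ is, for large $T$, approximately $\mathcal{N}(\mu,\sigma^2)$, and the maximum of $n$ such variables concentrates around $\mu+\sigma\sqrt{2\log n}$ as $n\to\infty$; this is precisely the maximum-occupancy asymptotic adopted from \cite{michelen2019short}. Substituting $\sigma^2$ gives
\[
\mathbb{E}[N_{n:n}^T]\sim\frac{Tr}{n}+\sqrt{\frac{2Tr(n-r)\log n}{n^2}}.
\]
The minimum is handled by a symmetry/complement argument: the map $N_i^T\mapsto T-N_i^T$ counts the repetitions in which urn $i$ is \emph{not} chosen, which is again a sum of i.i.d. indicators with the same variance $\sigma^2$, so applying the same concentration to the complementary counts and subtracting from $T$ yields
\[
\mathbb{E}[N_{1:n}^T]\sim\frac{Tr}{n}-\sqrt{\frac{2Tr(n-r)\log n}{n^2}}.
\]

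Finally I would form the ratio $\mathbb{E}[N_{1:n}^T]/\mathbb{E}[N_{n:n}^T]$, which directly reproduces the displayed fraction, and impose the outer $\max\{0,\cdot\}$ on the ground that an occupancy count is nonnegative: when $\mu<\sigma\sqrt{2\log n}$ the Gaussian surrogate for the minimum would be negative while the true minimum is floored at $0$, so the ratio cannot drop below $0$.

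The main obstacle is justifying the extreme-value step in a regime where the $\{N_i^T\}$ are \emph{not} independent: within any single repetition exactly $r$ urns are selected, inducing negative correlation across urns. One must argue that this dependence is too weak to perturb the $\sqrt{2\log n}$ leading order — either via a Poissonisation or negative-association argument, or simply by appealing to the $r>1$ without-replacement max-occupancy asymptotic of \cite{michelen2019short}, which already accommodates this model. A secondary subtlety is the order of limits: the statement sends $n\to\infty$ while the Gaussian approximation needs $T$ large, so the CLT and the extreme-value asymptotics must be uniform enough for the two limits to combine, which is again the content imported from \cite{michelen2019short}.
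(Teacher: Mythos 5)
Your proposal is correct and takes essentially the same route as the paper's own proof: the paper likewise uses the asymptotic $\mathbb{E}[N_{n:n}^T]\approx \frac{Tr}{n}+\sqrt{2Tr(n-r)\log(n)/n^2}$ (imported directly from \cite{behrouzi2019average} and \cite{michelen2019short} rather than re-derived), obtains $\mathbb{E}[N_{1:n}^T]$ by the same symmetric argument, imposes the $\max\{0,\cdot\}$ because the CLT-based approximation can go negative (positivity requiring $T>2(\frac{n}{r}-1)\log n$), and then forms the ratio. The only difference is presentational: you unpack the binomial marginal and the Gaussian extreme-value scaling $\sigma\sqrt{2\log n}$ that the paper leaves encapsulated in its citations.
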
{}
\begin{proof}
The average number of balls in the maximum loaded bin, after $T$ repetitions of $experiment 1$ and with random scheduling, see \cite{behrouzi2019average}, is,
    \begin{equation}
        \mathbb{E}[N_{n:n}]=\frac{Tr}{n}+C_{n,r}\sqrt{T}+O\left(1/\sqrt{T}\right).
    \end{equation}{}
Then it is proved in \cite{michelen2019short} that,
    \begin{equation}
        \underset{n\rightarrow\infty}{\textup{lim}}C_{n,r}=\sqrt{\frac{2r(n-r)\textup{log}(n)}{n^2}}.
    \end{equation}{}
Therefore,
    \begin{equation}
        \underset{n\rightarrow\infty}{\textup{lim}}\mathbb{E}[N_{n:n}]=\frac{Tr}{n}+\sqrt{\frac{2Tr(n-r)\textup{log}(n)}{n^2}}.
    \end{equation}{}
Using the same approach as in \cite{michelen2019short}, the number of balls in the minimum loaded urn could be approximated by,
    \begin{equation}
        \underset{n\rightarrow\infty}{\textup{lim}}\mathbb{E}[N_{1:n}]=\frac{Tr}{n}-\sqrt{\frac{2Tr(n-r)\textup{log}(n)}{n^2}}.
        \label{minApprox}
    \end{equation}{}
However, since this approximations are derived using central limit theorem, negative values are also possible. In fact, it is easy to verify that for (\ref{minApprox}) to be positive, $T>2(\frac{n}{r}-1)\textup{log}n$. Therefore, an approximation of the number of balls in the minimum loaded urn could be given as,
    \begin{equation}
        \underset{n\rightarrow\infty}{\textup{lim}}\mathbb{E}[N_{1:n}]=\textup{max}\left\{0,\frac{Tr}{n}-\sqrt{\frac{2Tr(n-r)\textup{log}(n)}{n^2}}\right\},
    \end{equation}{}
which completes the proof.
\end{proof}
The approximations for the number of balls in the maximum/minimum loaded urns together with their experimental values are plotted in Fig. \ref{fig:max-min}. The approximated values follow the experiment very closely, both for the maximum loaded and the minimum loaded bins. In Fig. \ref{fig:lbf}, experimental and approximated values of LBF are plotted. The approximated values follow the experiment closely. Nevertheless, there is a gap between the two set of values, which decreases as $r$ gets closer to $n$. Note that, taking $r$ as a variable is only for showing the performance of our approximations and, later in this section, we will see that there should be a one-to-one relation between $r$ and $n$, for random scheduling to be comparable with the other policies.

    \begin{figure}[htbp]
        \centering
        \includegraphics[width=\columnwidth]{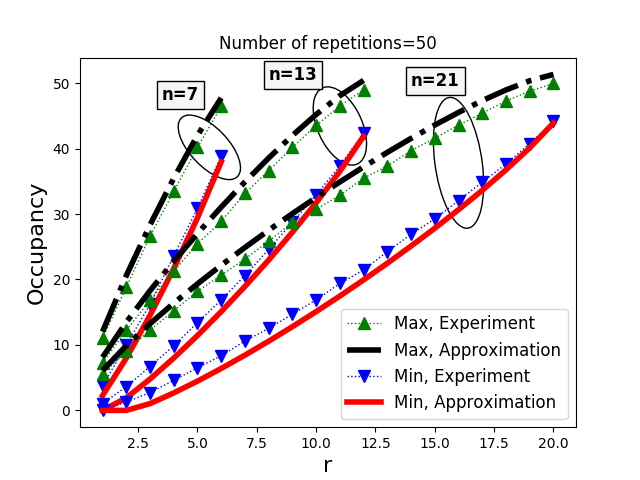}
        \caption{Average number of balls in the maximum and the minimum loaded urns after 50 repetitions of $experiment1$ (arrival of 50 jobs in the only-arrival system), as a function of the redundancy level $r$, for three different values of the number of urns (servers) $n$.}
        \label{fig:max-min}
    \end{figure}
    
    \begin{figure}[htbp]
        \centering
        \includegraphics[width=\columnwidth]{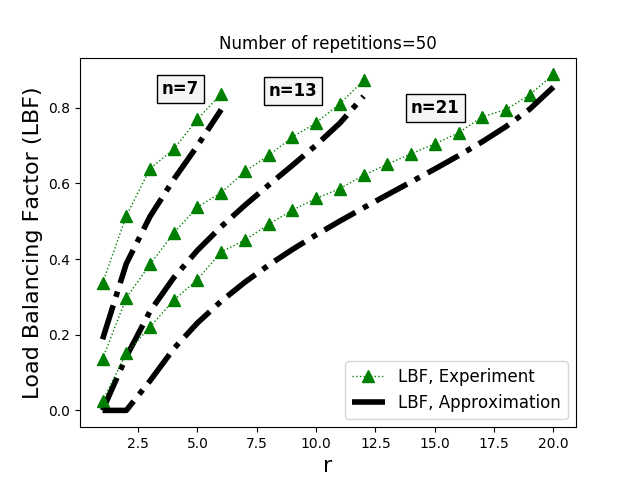}
        \caption{Load balancing factor (LBF) after 50 repetitions of $experiment1$ (arrival of 50 jobs in the only-arrival system), as a function of the redundancy level $r$, for three different values of the number of urns (servers) $n$.}
        \label{fig:lbf}
    \end{figure}
    
\begin{proposition}
With $T$ repetitions of experiment 1 and random urn selection in each round,
    \begin{equation}
            \textup{ROF}_{rand}=\frac{n}{r^2},~~~
            \textup{RDF}_{rand}=\frac{n(n-1)}{r^2(n+r(r-2))}.
    \label{RDFrandom}
    \end{equation}
\end{proposition}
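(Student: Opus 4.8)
The plan is to identify the overlap count $X$ with a hypergeometric random variable and then read off its first two moments by standard binomial-coefficient manipulations. Fix the \emph{initial set} $S_0$ of $r$ urns. In any later round the chosen set $S$ is an $r$-subset drawn uniformly without replacement from the $n$ urns, independently of $S_0$; hence $X=|S\cap S_0|$ counts how many of the $r$ ``marked'' urns (those in $S_0$) land in a uniformly random $r$-subset. This is exactly the hypergeometric law with population $n$, $r$ marked items, and $r$ draws, so
\begin{equation}
P(X=k)=\frac{\binom{r}{k}\binom{n-r}{r-k}}{\binom{n}{r}},\qquad k=0,1,\dots,r.
\end{equation}

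First I would compute $\mathbb{E}[X]=\sum_k k\,P(X=k)$. Using the absorption identity $k\binom{r}{k}=r\binom{r-1}{k-1}$ and then Vandermonde's convolution $\sum_k\binom{r-1}{k-1}\binom{n-r}{r-k}=\binom{n-1}{r-1}$, the sum collapses to $\mathbb{E}[X]=r\binom{n-1}{r-1}/\binom{n}{r}=r^2/n$. Substituting into (\ref{def:ROF}) immediately gives $\textup{ROF}_{rand}=n/r^2$.

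Next I would reach the second moment through the falling-factorial moment, which keeps the binomial bookkeeping clean. Writing $k(k-1)\binom{r}{k}=r(r-1)\binom{r-2}{k-2}$ and applying Vandermonde again yields $\mathbb{E}[X(X-1)]=r(r-1)\binom{n-2}{r-2}/\binom{n}{r}=r^2(r-1)^2/\big(n(n-1)\big)$. Then $\mathbb{E}[X^2]=\mathbb{E}[X(X-1)]+\mathbb{E}[X]$, and placing both terms over the common denominator $n(n-1)$ turns the numerator into $r^2\big[(r-1)^2+(n-1)\big]=r^2\big(n+r(r-2)\big)$, so $\mathbb{E}[X^2]=r^2(n+r(r-2))/\big(n(n-1)\big)$. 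Inverting per (\ref{def:RDF}) produces the claimed $\textup{RDF}_{rand}$.

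The only delicate point is recognizing the two coefficient sums as instances of Vandermonde's identity; once the absorption identities lower the upper index, both sums close in one line and all that remains is the elementary simplification $(r-1)^2+(n-1)=n+r(r-2)$. As a cross-check I would note that the same answer follows from the textbook hypergeometric variance $\mathrm{Var}[X]=r^2(n-r)^2/\big(n^2(n-1)\big)$ together with $\mathbb{E}[X^2]=\mathrm{Var}[X]+(\mathbb{E}[X])^2$, confirming the formula without reworking the sums.
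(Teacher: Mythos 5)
Your proposal is correct and follows essentially the same route as the paper's own (omitted-for-space) proof: both identify the overlap $X$ with the hypergeometric law $P(X=k)=\binom{r}{k}\binom{n-r}{r-k}/\binom{n}{r}$ and then compute $\mathbb{E}[X]=r^2/n$ and $\mathbb{E}[X^2]=r^2(n+r(r-2))/(n(n-1))$ before inverting per the definitions of ROF and RDF. Your use of absorption/Vandermonde identities and the factorial-moment decomposition simply fills in the summation details the paper states without derivation.
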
{}
The proof is omitted because of the space constraint.

\subsection{Round-Robin Scheduling}
With round-robin scheduling, the urns in each repetition of $experiment 1$ are chosen in a cyclic fashion. Particularly, for repetition $i$ the chosen urns are $((i-1)r+1)\mod{n},((i-1)r+2)\mod{n},\dots,((i-1)r+r)\mod{n}$, for $i\in[1,T]$.

\begin{proposition}
With round-robin scheduling,
    \begin{equation}
        \textup{LBF}_{round-robin}=1.
    \end{equation}{}
\end{proposition}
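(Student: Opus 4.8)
The plan is to exploit the fact that round-robin is a \emph{deterministic} policy, so each $N_i^T$ is a constant rather than a genuine random variable, and then to show that this constant is the same for every urn. Once every urn holds exactly the same number of balls, the minimum- and maximum-loaded urns coincide, their expectations are equal, and the ratio in (\ref{def:LBF}) is $1$.

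First I would make the indexing explicit. Over the $T$ repetitions, repetition $i$ places balls in the urns $((i-1)r+j)\bmod n$ for $j=1,\dots,r$. As $i$ ranges over $1,\dots,T$ and $j$ over $1,\dots,r$, the quantity $(i-1)r+j$ runs through every integer from $1$ to $Tr$ exactly once. Hence the multiset of urns receiving a ball is precisely $\{\,m\bmod n : m=1,\dots,Tr\,\}$, and the (deterministic) load of the urn with residue $c$ equals the number of $m\in\{1,\dots,Tr\}$ with $m\equiv c\pmod n$.

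Next I would invoke the standing assumption $n\mid T$. Since $T$ is a multiple of $n$, so is $Tr$, and therefore the integers $1,\dots,Tr$ decompose into exactly $Tr/n$ complete runs of $n$ consecutive residues modulo $n$. Each residue class is thus represented the same number of times, so every urn receives exactly $Tr/n$ balls:
\begin{equation}
N_i^T=\frac{Tr}{n},\qquad i=1,\dots,n.
\end{equation}
In particular $N_{1:n}^T=N_{n:n}^T=Tr/n$ with probability one, whence $\mathbb{E}[N_{1:n}^T]=\mathbb{E}[N_{n:n}^T]$ and $\textup{LBF}_{round-robin}=1$.

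The argument has essentially no hard step; the only point requiring care is the bookkeeping of the modular indexing. In particular, one should check that the convention for $0\bmod n$ (equivalently, a shift of the residue labels) does not introduce an off-by-one imbalance, and that it is the divisibility $n\mid T$ — rather than merely $n\mid Tr$ — that guarantees the runs close up evenly for \emph{every} admissible $r$. This is the one place where a careless treatment could spuriously suggest a load of $Tr/n\pm 1$ in a few urns and thereby break the exact equality $N_{1:n}^T=N_{n:n}^T$.
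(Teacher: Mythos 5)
Your proof is correct. The paper actually states this proposition without any proof (the closest analogue is its two-line argument for the BIBD load-balancing proposition, which rests on the same observation: with $n \mid T$ every urn is selected exactly $Tr/n$ times), and your residue-class counting — noting that $(i-1)r+j$ sweeps $1,\dots,Tr$ exactly once and that $n \mid T$ forces every residue class modulo $n$ to appear exactly $Tr/n$ times — is precisely the careful formalization of that implicit reasoning, including the correct handling of the $0 \bmod n$ labeling.
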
{}
\begin{proposition}
With round-robin scheduling,
    \begin{equation}
        \begin{split}
            \textup{ROF}_{round-robin}&\approx\frac{n}{r^2}, \quad\textup{as}\quad T\rightarrow\infty,\\
            \textup{RDF}_{round-robin}&\approx\frac{3n}{2r^3+r}, \quad\textup{as}\quad T\rightarrow\infty.
        \end{split}{}
    \end{equation}{}
    \label{RDF:roundRobin}
\end{proposition}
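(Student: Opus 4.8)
The plan is to exploit the deterministic, cyclic structure of round-robin and turn the moment computation into a finite average over cyclic shifts. At repetition $i$ the chosen urns form the arc $\{(i-1)r+1,\dots,(i-1)r+r\}\pmod n$, which is precisely the initial set $\{1,\dots,r\}$ rigidly rotated by $s:=(i-1)r\bmod n$ positions on the cycle $\mathbb{Z}_n$. Hence the overlap $X$ with the initial set depends on $i$ only through the shift, $X=f(s)$, and the whole problem reduces to (i) identifying the overlap function $f$, and (ii) identifying the distribution of $s$ induced by letting $i$ range over the repetitions.

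First I would compute $f$. Viewing the two chosen sets as two length-$r$ arcs on $\mathbb{Z}_n$, for $r\le n/2$ the intersection splits into a ``forward'' piece and a ``wrap-around'' piece that never coexist, so $f(s)=(r-s)^{+}+(r-(n-s))^{+}$ with $(\cdot)^{+}=\max(0,\cdot)$. Concretely $f$ equals $r$ at $s=0$, runs down $r-1,\dots,1$ for $s=1,\dots,r-1$, vanishes for $r\le s\le n-r$, and climbs $1,\dots,r-1$ again for $s=n-r+1,\dots,n-1$.

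Next I would pin down the law of $s$. Since $d:=\gcd(r,n)$ divides $r$, the map $i\mapsto(i-1)r\bmod n$ has period $n/d$ and its image is exactly the multiples of $d$ in $\{0,d,\dots,n-d\}$, each visited once per period; with $n\mid T$ and $T\to\infty$ this orbit is traversed uniformly, so $\mathbb{E}[X]=\tfrac{d}{n}\sum_{j}f(jd)$ and $\mathbb{E}[X^{2}]=\tfrac{d}{n}\sum_{j}f(jd)^{2}$. Writing $m=r/d$ and inserting $f$, the first average uses $\sum_{k=1}^{m-1}k=\binom{m}{2}$ and collapses to $\mathbb{E}[X]=r^{2}/n$ exactly; the second uses $\sum_{k=1}^{m-1}k^{2}=\tfrac{(m-1)m(2m-1)}{6}$ and gives $\mathbb{E}[X^{2}]=(2r^{3}+d^{2}r)/(3n)$. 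Inverting via (\ref{def:ROF}) and (\ref{def:RDF}) yields $\textup{ROF}_{round-robin}=n/r^{2}$ and $\textup{RDF}_{round-robin}=3n/(2r^{3}+d^{2}r)$, which reproduces the stated RDF $=3n/(2r^{3}+r)$ whenever $\gcd(r,n)=1$.

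This last point also explains the ``$\approx$'': the first moment, and hence ROF, is $\gcd$-independent and exact, whereas in the second moment only the leading term $2r^{3}/(3n)$ is universal, the lower-order $d^{2}r$ reducing to the stated $r$ exactly when $d=1$. The main obstacle I anticipate is precisely this $\gcd$ bookkeeping: keeping the orbit size $n/d$, the spacing of sampled shifts, and the restricted power sums mutually consistent so that $\mathbb{E}[X]$ comes out $d$-free while $\mathbb{E}[X^{2}]$ does not. A secondary obstacle is the regime $2r>n$, where the forward and wrap pieces collide, $f$ no longer vanishes, and the closed form for $f$ must be re-derived; I would either restrict attention to $r\le n/2$ or handle that case separately.
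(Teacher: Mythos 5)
Your proposal is correct and, at its core, follows the same route as the paper's proof: use the deterministic cyclic structure of round-robin to identify the distribution of the overlap $X$ between an arbitrary repetition and the initial set, compute the first two moments, let $T\to\infty$, and invert via (\ref{def:ROF}) and (\ref{def:RDF}). The paper does this by directly counting, among the $T-1$ subsequent repetitions, how many sets meet the initial set in exactly $k$ places ($\tfrac{2T}{n}-1$ sets for each $k\in[1,r-1]$, $\tfrac{T}{n}-1$ for $k=r$), which gives $\mathbb{E}[X]\approx r^2/n$ and $\mathbb{E}[X^2]\approx(2r^3+r)/(3n)$. Where you genuinely go further is the $\gcd$ bookkeeping: the paper's count tacitly assumes that the shifts $(i-1)r\bmod n$ sweep out all residues, i.e.\ that $\gcd(r,n)=1$, whereas your orbit analysis shows that for $d=\gcd(r,n)>1$ the first moment (hence ROF) is unchanged but $\mathbb{E}[X^2]=(2r^3+d^2r)/(3n)$, so the stated RDF is exact only when $d=1$. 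This refinement exposes an implicit hypothesis of the proposition, but it is harmless for the paper's purposes: in the comparison regime used throughout, $n=r(r-1)+1\equiv 1\pmod{r}$, so $\gcd(r,n)=1$ automatically. Your secondary worry about $2r>n$ is also not an issue: your formula $f(s)=(r-s)^{+}+\bigl(r-(n-s)\bigr)^{+}$ remains valid even when the forward and wrap-around pieces coexist (they are disjoint intervals whenever $r\le n$, and their sizes add to $2r-n$ in that regime), and in any case $n=r(r-1)+1\geq 2r$ for all $r\geq 3$.
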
{}
\begin{proof}
With round-robin policy, let's consider (without loss of generality) that the initial set is $\{1,2,\dots,r\}$. Now suppose we repeat $experiment 1$ for $T-1$ times (after the first repetition), such that $n|T$. As the result, there will be $T-1$ sets of bins, chosen according to round-robin policy. It is easy to see that the number of sets that overlap with the initial set in $k$ place is $\frac{2T}{n}-1$, for $k\in[1,r-1]$, and $\frac{T}{n}-1$, for $k=r$. Therefore, the probability of a set of urns drawn in an arbitrary repetitions of $experiment 1$, within the following $T-1$ repetitions, to have exactly $k$ overlapping urns with the initial set is,
    \begin{equation}
        Pr\{X=k\} =
            \begin{cases}
                1-\frac{2r-1}{n}+\frac{r}{T} \quad & k=0,\\
                \frac{2}{n}-\frac{1}{T} \quad & k=1,2,\dots,r-1,\\
                \frac{1}{n}-\frac{1}{T} \quad & k=r,\\
                0 & \textup{otherwise}.
            \end{cases}       
    \end{equation}
Therefore,
    \begin{equation}
        \begin{split}
            \mathbb{E}[X]&=\left(\frac{1}{n}-\frac{1}{T}\right)r+\sum_{k=1}^{r-1}k\left(\frac{2}{n}-\frac{1}{T}\right)\\
            &\approx\frac{r^2}{n}\qquad\textup{as}\quad T\rightarrow\infty.
        \end{split}{}
        \label{EX_rr}
    \end{equation}
And
    \begin{equation}
        \begin{split}
            \mathbb{E}[X^2]&=\left(\frac{1}{n}-\frac{1}{T}\right)r^2+\sum_{k=1}^{r-1}k^2\left(\frac{2}{n}-\frac{1}{T}\right)\\
            &\approx\frac{2r^3+r}{3n}\qquad\textup{as}\quad T\rightarrow\infty.
        \label{EX2_rr}
        \end{split}{}
    \end{equation}
Respective substitution (\ref{EX_rr}) and (\ref{EX2_rr}) in (\ref{def:ROF}) and (\ref{def:RDF}), completes the proof.
\end{proof}{}

\subsection{Scheduling with Block Designs}
\subsubsection{Preliminary on Block Designs}
A block design is a pair $(\CMcal{X},\CMcal{A})$, where $\CMcal{X}$ is a set of objects and $\CMcal{A}$ is a set of non-empty subsets of $\CMcal{X}$, called \textit{blocks} \cite{stinson2007combinatorial}. In this work we focus on a certain type of designs called \textit{Balanced and Incomplete Block Designs} (BIBD), defined as follows. A $(\nu,k,\lambda)$-BIBD is a design that satisfies the following properties,
    \begin{enumerate}
        \item $|\CMcal{X}|=\nu$,
        \item every block consists of $k$ objects, and
        \item every pair of distinct objects is contained in exactly $\lambda$ blocks.
    \end{enumerate}{}
Note that, BIBD may not be possible for an arbitrary set of parameters. In fact, to satisfy the preceding properties the parameters are tightly dependent. Particularly in a symmetric BIBD, $\lambda(\nu-1)=k(k-1)$. In this work, we only consider symmetric BIBDs with $\lambda=1$. The reason for this choice is that with $\lambda=1$, the design provides minimum overlap between the blocks. As an example, with $\CMcal{X}=\{0,1,\dots,6\}$, the $(7,3,1)$-BIBD is
    \begin{equation*}
        012,034,056,146,157,247,256.
    \end{equation*}{}
For simplicity, we refer to a symmetric BIBD with $\lambda=1$ as BIBD in the rest of this paper.
\subsubsection{Scheduling with BIBD}
We choose $n$ and $r$ in $experiment 1$ such that there exist a $(n,r,1)$-BIBD. For that to exist, $n$ and $r$ should satisfy $n=r(r-1)+1$. With scheduling with BIBD, in each repetition of $experiment 1$, we choose $r$ urns the indices of which is dictated by one of the $n$ possible blocks, and blocks get selected in a round-robin fashion from the set of all blocks. For a visualization on the scheduling with BIBD visit \cite{catania}. According to the properties of the blocks in a BIBD, we next derive LBF, ROF and RDF for BIBD policy.
\begin{proposition}
With $T$ repetitions of $experiment 1$, such that $n|T$, and BIBD scheduling,
    \begin{equation}
        \textup{LBF}_{BIBD}=1.
    \end{equation}{}
\end{proposition}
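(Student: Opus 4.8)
The plan is to show that BIBD scheduling distributes balls \emph{deterministically and uniformly} across the urns, so that every urn holds exactly the same number of balls after $T$ repetitions of $experiment 1$; once this is established, the minimum and maximum occupancies coincide, and the ratio defining LBF in (\ref{def:LBF}) is trivially $1$.

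First I would recall the combinatorial parameters of a symmetric $(n,r,1)$-BIBD. Because the design is symmetric, the number of blocks equals the number of points, namely $n$, which is exactly why round-robin selection over the blocks is well posed here. The key quantity I need is the \emph{replication number}, i.e.\ the number of blocks containing any fixed urn. I would obtain it by a double-counting argument: fix an urn $x$ and count pairs $(y,B)$ with $y\neq x$ and $B$ a block containing both $x$ and $y$. On one hand, each of the $n-1$ choices of $y$ lies in exactly $\lambda=1$ common block with $x$, giving $n-1$ such pairs; on the other hand, if $x$ lies in $\rho$ blocks, each contributes its $r-1$ other points, giving $\rho(r-1)$ pairs. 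Hence $\rho(r-1)=n-1=r(r-1)$ after substituting $n=r(r-1)+1$, so $\rho=r$: every urn lies in exactly $r$ blocks.

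Next I would invoke the round-robin block-selection rule together with the assumption $n\mid T$. Since the $n$ blocks are cycled through in order, after $T$ repetitions each block has been selected exactly $T/n$ times. Combining this with the replication count, each urn receives balls only from its $r$ incident blocks, each contributing $T/n$ balls, so the occupancy of every urn is exactly $rT/n$, with no randomness whatsoever. Consequently $N_i^T=rT/n$ for all $i$, the order statistics satisfy $N_{1:n}^T=N_{n:n}^T=rT/n$ deterministically, and therefore $\mathbb{E}[N_{1:n}^T]=\mathbb{E}[N_{n:n}^T]$, yielding $\textup{LBF}_{BIBD}=1$.

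There is essentially no obstacle in this argument; the single step that requires care is establishing that the replication number is exactly $r$, which is the regularity property of a symmetric BIBD that forces perfect load balance. This is structurally the same phenomenon that gives round-robin $\textup{LBF}=1$ in the earlier proposition, the difference being that BIBD also controls the pairwise overlaps through $\lambda=1$, which is what will matter for the ROF and RDF computations rather than for LBF.
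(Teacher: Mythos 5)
Your proof is correct and follows essentially the same route as the paper's: both rest on the fact that each urn appears in exactly $r$ of the $n$ blocks, so that cycling through all blocks $T/n$ times gives every urn exactly $rT/n$ balls deterministically, forcing $\mathbb{E}[N_{1:n}^T]=\mathbb{E}[N_{n:n}^T]$. The only difference is that the paper asserts the replication number $r$ as ``easy to see,'' whereas you supply the standard double-counting argument ($\rho(r-1)=\lambda(n-1)$ with $\lambda=1$ and $n=r(r-1)+1$) that justifies it.
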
{}
\begin{proof}
It is easy to see that at the end of every $n$ rounds each urn have got selected $r$ times. Since $n|T$, each urn gets selected exactly $\frac{Tr}{n}$. Therefore urns get selected for equal number of times throughout the $T$ repetitions of $experiment 1$.
\end{proof}{}
\begin{proposition}
With BIBD scheduling, we have 
    \begin{equation}
            \textup{ROF}_{BIBD}\approx\frac{n}{n+r-1},~~
            \textup{RDF}_{BIBD}\approx\frac{n}{n+r^2-1}.
    \end{equation}
\end{proposition}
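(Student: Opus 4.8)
The plan is to exploit the defining intersection property of symmetric block designs, which collapses the overlap random variable $X$ onto just two values whose moments can then be read off almost immediately. First I would recall the standard structural facts about a symmetric $(n,r,1)$-BIBD: the number of blocks equals the number of points, so there are exactly $n$ blocks; each point lies in exactly $r$ blocks; and, crucially, any two \emph{distinct} blocks meet in exactly $\lambda=1$ point. This last fact is the dual of the pairwise-balance condition and is the key step on which the whole computation rests. A block of course meets itself in all $r$ of its points. Note that $n=r(r-1)+1$ is exactly the parameter constraint under which such a design exists, so the choice of $(n,r)$ in $experiment\,1$ is consistent.

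Next I would pin down the distribution of $X$. Fixing the initial set to be a block $B_0$, the set chosen in any subsequent repetition is again one of the $n$ blocks, cycled through round-robin. Its overlap with $B_0$ equals $r$ precisely when the chosen block is $B_0$ itself, and equals exactly $1$ for each of the other $n-1$ blocks, by the symmetric-design intersection property. Since $n\mid T$ and the $n$ blocks are selected uniformly in a cycle, over the $T-1$ repetitions following the initial one $B_0$ recurs $\tfrac{T}{n}-1$ times while each of the remaining $n-1$ blocks recurs $\tfrac{T}{n}$ times, so that
\[
\textup{Pr}\{X=r\}=\frac{\tfrac{T}{n}-1}{T-1},\qquad
\textup{Pr}\{X=1\}=\frac{(n-1)\tfrac{T}{n}}{T-1},
\]
and these are the only two atoms (this mirrors the bookkeeping used in the round-robin proof, where the initial repetition is excluded).

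With the distribution in hand the remainder is routine. I would compute
\[
\mathbb{E}[X]=r\cdot\frac{\tfrac{T}{n}-1}{T-1}+\frac{(n-1)\tfrac{T}{n}}{T-1}\approx\frac{n+r-1}{n}\quad\textup{as}\quad T\rightarrow\infty,
\]
and, since $X\in\{1,r\}$ forces $X^2\in\{1,r^2\}$ with the same probabilities,
\[
\mathbb{E}[X^2]=r^2\cdot\frac{\tfrac{T}{n}-1}{T-1}+\frac{(n-1)\tfrac{T}{n}}{T-1}\approx\frac{n+r^2-1}{n}\quad\textup{as}\quad T\rightarrow\infty.
\]
Substituting these limits into $\textup{ROF}=1/\mathbb{E}[X]$ and $\textup{RDF}=1/\mathbb{E}[X^2]$ yields the claimed $\tfrac{n}{n+r-1}$ and $\tfrac{n}{n+r^2-1}$. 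The approximation sign reflects only the $O(1/T)$ boundary correction coming from excluding the initial repetition, exactly as in Proposition~\ref{RDF:roundRobin}.

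The main obstacle here is conceptual rather than computational: one must invoke the correct theorem about symmetric designs (distinct blocks share exactly $\lambda$ points) instead of attempting a combinatorial count as in the random-scheduling proof. Once that property is in place, $X$ reduces to two atoms and the moment calculations are immediate. The only secondary point to verify carefully is the selection bookkeeping under round-robin, where both the assumption $n\mid T$ and the fact that a symmetric BIBD has exactly $n$ blocks are used.
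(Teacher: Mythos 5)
Your proof follows essentially the same route as the paper's: both collapse $X$ onto the two-atom distribution supported on $\{1,r\}$ via the symmetric-design intersection property, compute $\mathbb{E}[X]$ and $\mathbb{E}[X^2]$, and take $T\rightarrow\infty$ before inverting. If anything, your version is slightly more careful --- you normalize the overlap counts by $T-1$ so the distribution sums exactly to one, and your limit $\mathbb{E}[X]\approx\frac{n+r-1}{n}$ corrects a typo in the paper's displayed first-moment computation, which writes $\frac{n+r^2-1}{n}$ even though the stated result $\textup{ROF}_{BIBD}\approx\frac{n}{n+r-1}$ requires $\frac{n+r-1}{n}$.
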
{}
\begin{proof}
The proof is same in principle with the proof of Proposition \ref{RDF:roundRobin}. Except, the probability distribution for the random variable $X$ is,
    \begin{equation}
         Pr\{X=k\} =
            \begin{cases}
                \frac{n-1}{n} \quad & k=1,\\
                \frac{1}{n}-\frac{1}{T} \quad & k=r,\\
                0 & \textup{otherwise}.
            \end{cases}    
    \end{equation}{}
Therefore,
    \begin{equation}
        \begin{split}{}
            \mathbb{E}[X]&=\frac{n-1}{n}+\left(\frac{1}{n}-\frac{1}{T}\right)r\\
                 &\approx\frac{n-1}{n}+\frac{r}{n}\qquad\textup{as}\quad T\rightarrow\infty,\\
                 &=\frac{n+r^2-1}{n}.
        \end{split}
        \label{EX_bibd}
    \end{equation}{}
And,
    \begin{equation}
        \begin{split}{}
            \mathbb{E}[X^2]&=\frac{n-1}{n}+\left(\frac{1}{n}-\frac{1}{T}\right)r^2\\
                 &\approx\frac{n-1}{n}+\frac{r^2}{n}\qquad\textup{as}\quad T\rightarrow\infty,\\
                 &=\frac{n+r^2-1}{n}.
        \end{split}
        \label{EX2_bibd}
    \end{equation}
Respective substitution of (\ref{EX_bibd}) and (\ref{EX2_bibd}) in (\ref{def:ROF}) and (\ref{def:RDF}) completes the proof.
\end{proof}{}
\subsection{Comparison of the Indicators}
In order to compare the scheduling policies we have to maintain the relationship between $n$ and $r$ that is dictated by BIBD policy, i.e. $n=r(r-1)+1$.

\begin{table*}[t]
    \caption{Performance indicators of scheduling policies} 
    \centering 
    {\normalsize
    \begin{tabular}{l c c c} 
        \hline\hline  
        Policy & LBF($\leq 1$) & ROF($\leq 1$) & RDF($\leq 1$)
        \\ [0.5ex]
        \hline\hline 
        Random &  $\textup{max}\left\{0,\frac{\frac{Tr}{(r-1)^2+r}-\sqrt{\frac{2Tr(r-1)^2\textup{log}((r-1)^2+r)}{((r-1)^2+r)^2}}}{\frac{Tr}{(r-1)^2+r}+\sqrt{\frac{2Tr(r-1)^2\textup{log}((r-1)^2+r)}{((r-1)^2+r)^2}}}\right\}$ & $\frac{(r-1)^2+r}{r^2}$ & $\frac{(r-1)^2+r}{r(2r-1)}$\\
        Round-robin & 1 & $\frac{(r-1)^2+r}{r^2}$& $\frac{3[(r-1)^2+r]}{r(2r^2+1)}$\\
        BIBD & 1 & $\frac{(r-1)^2+r}{r^2}$ & $\frac{(r-1)^2+r}{r(2r-1)}$\\
        \hline 
    \end{tabular}}
    \label{tab:metrics}
\end{table*}

In terms of LBF, which we defined as the ratio of expected number of balls in the minimum loaded urn to that of the maximum loaded urn, round-robin and BIBD policies perform the same. In fact they achieve the highest possible load balancing, by assigning urns with equal number of balls when $n|T$. The load balancing property of random policy is inferior to the other policies, as it is shown in Fig. \ref{fig:lbfrdf}. On the other hand, in terms of ROF all three policies perform the same, as it can be seen in Table \ref{tab:metrics}. This is a surprising result, showing that if we select urns randomly at each repetition we would (on average) see the same amount of overlap as the case where the urns get selected with round-robin and BIBD, which are structured policies. In terms of RDF, random and BIBD policies perform equally, being superior to round-robin. The fact that random and BIBD have the same second moment of overlaps is quite counter-intuitive, since with random policy there is a higher number of possible sets, i.e. $\binom{n}{r}$, that can be selected at each repetition of $experiment 1$ compared to the number of possibilities with BIBD, i.e. $n$. Accordingly, one would expect the random policy to have a better performance in diversification of the sets of selected urns in the repetitions of $experiment 1$, compared to the (structured) BIBD policy. However, by looking closer to BIBD design it can be seen that, the gain of the policy, in terms of the number of non-interesting possible sets that it eliminates, is much greater than the pain of it, in terms of the number interesting possible sets that it prevents from happening. On the other hand, round-robin policy falls behind the other two policies in terms of RDF, which makes it inferior in handling the overlaps of the selected urns. Moreover, RDF of round-robin decreases as $r$ increases, which makes it a non-interesting policy when $r$ is large. In fact, we will see in Sec.~\ref{simResult} that the performance round-robin policy in a queuing system is more inferior to the other policies when $r$ is large. Putting all together, the average overlap is the same for all three policies, but in terms of diversity of redundancy round-robins is inferior to the other two policies.
    \begin{figure}[htbp]
        \centering
        \includegraphics[width=\columnwidth]{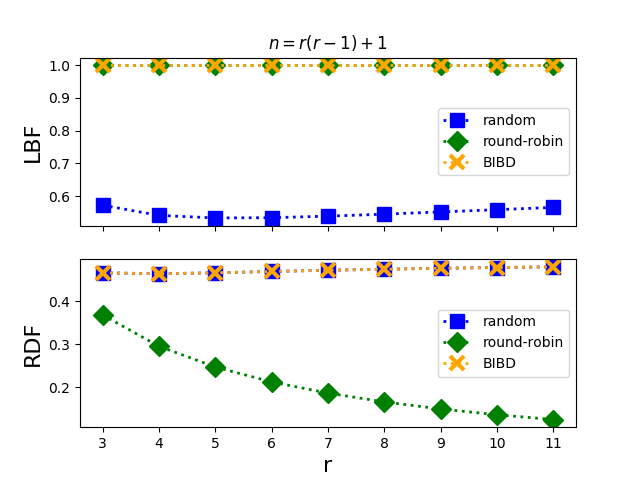}
        \caption{LBF and RDF for non-adaptive scheduling policies as a function of redundancy level $r$ and number of urns (servers) being $n=r(r-1)+1$. }
        \label{fig:lbfrdf}
    \end{figure}

In queuing systems, when jobs' service time are is bi-modal, the round-robin scheduling policy fails to manage the overlap between the assigned servers to subsequent jobs, increasing the chance of a small job getting trapped behind large jobs in many servers. On the other hand, random policy fails to maintain load balancing across the servers, which in turn increases the average waiting time in queues. Among the three non-adaptive policies, BIBD provides perfect load balancing and it has the best performance in terms of diversity of redundancy, which makes it a perfect candidate for scheduling of jobs in the system shown if Fig. \ref{fig:sysModel}. This observation will also be approved by simulations in the following section.

\section{Simulation Results}\label{simResult}
In this section we provide simulations results for queuing time with the scheduling policies discussed throughout the paper. We implemented a queuing system, with $n$ servers, each with its own queue, such that each arriving job gets scheduled to $r$ servers, selected according to a given scheduling policy. The redundant copies of a job get cancelled immediately after the first copy starts the service. If there is an idle server among the selected ones, the job enters service on that server, with no more redundant copies. With more than one idle server, ties are broken arbitrarily.

The bi-modal nature of jobs' service times has been observed in practice, e.g. Google Traces \cite{chen2010analysis} and is studied in queuing theory literature \cite{gupta2010inapproximability}. The service times of copies of a job are sampled from a fast exponential distribution if the job is short or from a slow exponential distribution if the job is long.
Assuming $1/\mu_1$ to be the average service time of the small jobs, using the model proposed in \cite{gardner2017better}, the average service time of large jobs is $1/\mu_2=q/\mu_1$, with $q$ being defined as (\ref{q}). Arrival of the jobs follows a Poisson process, with inter-arrival rate of $\lambda$. Each figure in this section consists of two sub-figures. The one at the left shows the queuing time for low arrival rates and the one at right shows the queuing time for high arrival rates. The set of parameters for each plot is shown by a 5-tuple, $(n,r,\mu_1,q,p)$.

Fig. \ref{fig:13_4_10_10} shows the average queuing time for the three scheduling policies, with $(n,r,\mu_1,q,p)=(13,4,10,10,0.1)$. In all arrival rates, BIBD policy outperforms both random and round-robin. The improvement ranges from about $10\%$ in high arrival rate up to $20\%$ in lower to medium arrival rates. The reason for less improvement in higher arrival regime is that, due to the higher average queue length, the contribution of the scheduling policy in the average waiting time is less dominant compared to the contribution of the queue length. With $(n,r,\mu_1,q,p)=(21,5,10,10,0.1)$, Fig. \ref{fig:21_5_10_10} shows that the relative performance of random and BIBD policies is almost the same as in $n=13$ and $r=4$. However, round robin policy becomes more inferior, due to larger $r$. The BIBD policy reduces the queuing time by $10\%$ and $25\%$ compared to random and round-robin, respectively. Note that, the larger the $r$ the higher the importance of handling overlaps between the redundant copies of jobs. As it is shown in Fig. \ref{fig:lbfrdf}, round-robin is the least effective policy for handling overlaps due to small RDF, which decreases as $r$ is increases.
    \begin{figure}[htbp]
        \centering
        \includegraphics[width=8.5cm,trim={1.8cm 0 2.7cm 0},clip]{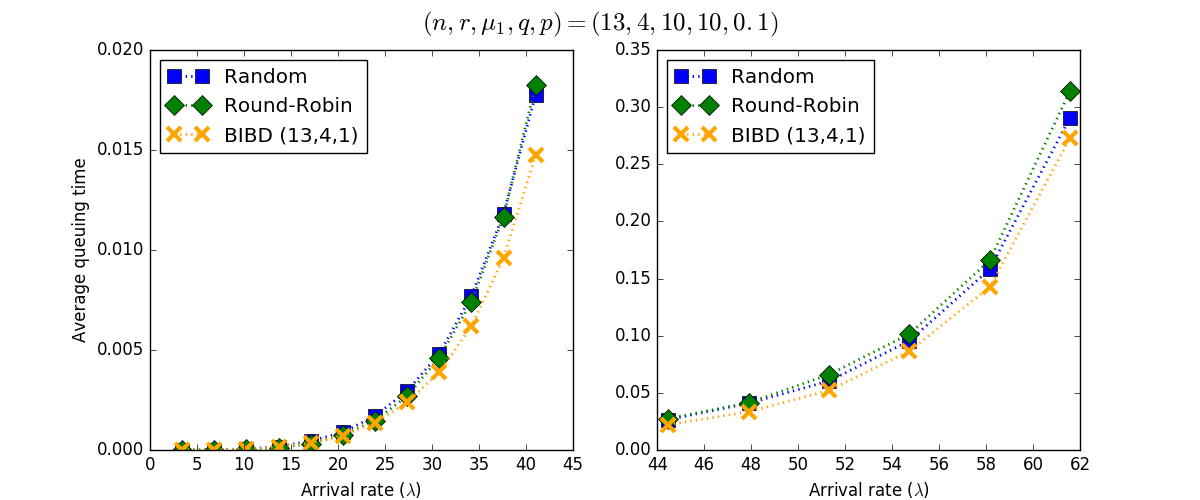}
        \caption{Average queuing time for non-adaptive scheduling policies, with $(n,r,\mu_1,q,p)=(13,4,10,10,0.1)$. }
        \label{fig:13_4_10_10}
    \end{figure}
    
    \begin{figure}[htbp]
        \centering
        \includegraphics[width=8.5cm,trim={1.8cm 0 2.7cm 0},clip]{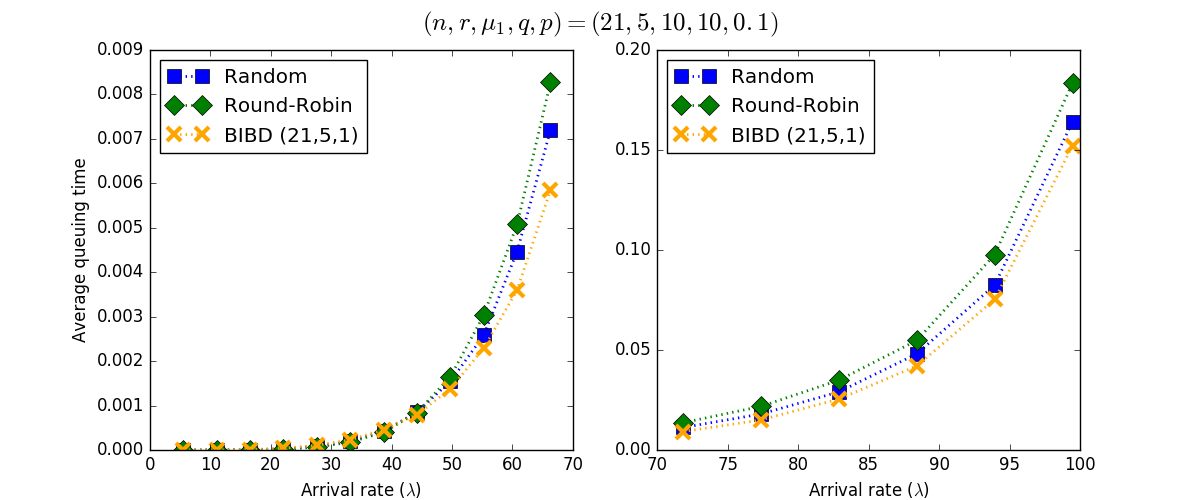}
        \caption{Average queuing time for non-adaptive scheduling policies, with $(n,r,\mu_1,q,p)=(21,5,10,10,0.1)$. }
        \label{fig:21_5_10_10}
    \end{figure}
Now if we keep the same $n$, $r$ and $p$ and just increase $q$, i.e. increase the average service time of the long jobs, with $(n,r,\mu_1,q,p)=(21,5,10,50,0.1)$, it can be seen from Fig. \ref{fig:21_5_50_10} that the performance gap between round robin, which does not handle overlaps, and BIBD/random gets larger. This observation holds for both low and high traffic regimes. With this set of parameters, BIBD reduces the queuing time by $100\%$ when compared to the round-robin policy. It is worth mentioning that, although round robin is an effective policy for load balancing, when the probability of large jobs is small but their average size is large, it fails to balance the loads on servers due to those large but not frequent jobs. Nevertheless, as it can be seen from Fig. \ref{fig:21_5_50_50}, when the probability of large jobs is high, with $(n,r,\mu_1,q,p)=(21,5,10,15,0.5)$, round robin performs closer to BIBD. On the other hand, random policy, which is more effective with handling overlaps, has a closer performance to BIBD when $p$ is smaller. This happens because, with small $p$ long jobs are not frequent and even BIBD, which balances the average load, fails in load balancing. Therefore, the load balancing capability of a scheduling policy has small impact on its performance when $p$ is small. In Fig. \ref{fig:21_5_50_50}, BIBD outperforms random and round-robin policies by up to $25\%$ and $50\%$, respectively.
    \begin{figure}[htbp]
        \centering
        \includegraphics[width=8.5cm,trim={1.8cm 0 2.7cm 0},clip]{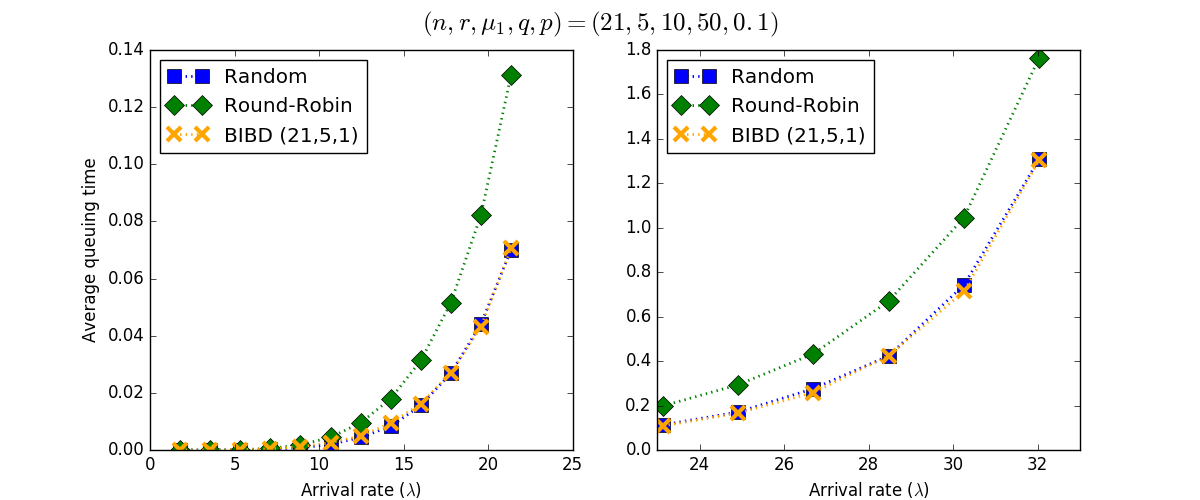}
        \caption{Average queuing time for non-adaptive scheduling policies, with $(n,r,\mu_1,q,p)=(21,5,10,50,0.1)$. }
        \label{fig:21_5_50_10}
    \end{figure}
    
    \begin{figure}[htbp]
        \centering
        \includegraphics[width=8.5cm,trim={1.8cm 0 2.7cm 0},clip]{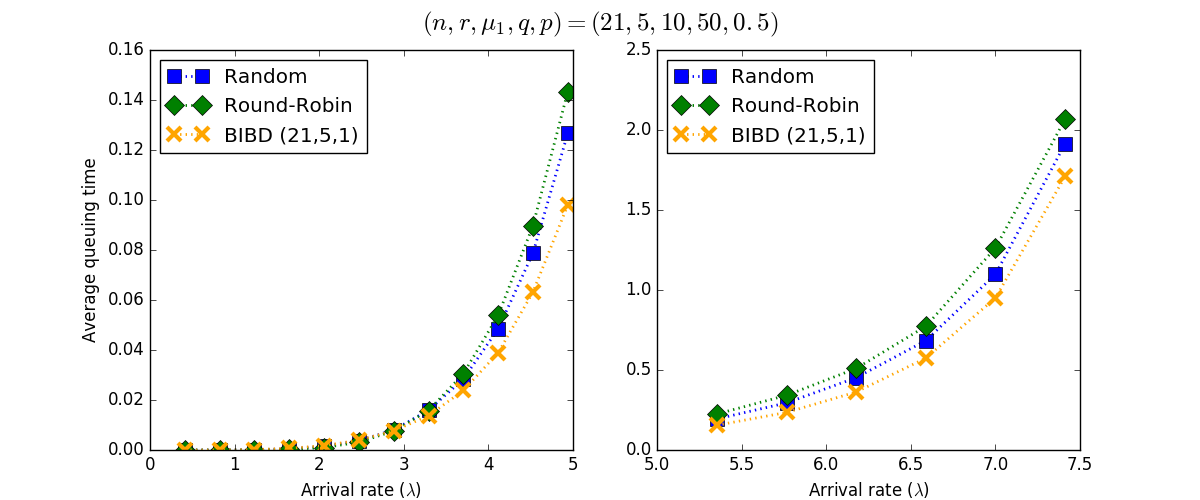}
        \caption{Average queuing time for non-adaptive scheduling policies, with $(n,r,\mu_1,q,p)=(21,5,10,50,0.5)$. }
        \label{fig:21_5_50_50}
    \end{figure}

\section{Conclusions and Future Work}\label{conclusion}
We studied the queuing time in a distributed system with $n$ servers, each with its own queue, where each arriving job gets replicated into $r$ servers, according to some non-adaptive scheduling policy. We considered jobs with bi-modal service times, where each arriving job has on average a long service time with probability $p$ and short service time with probability $1-p$. Specifically, the service time of each copy of short/long job was sampled from a fast/slow exponential distributions.. By developing an analogy to the classical urns and balls problem we introduced new performance indicators for scheduling policies. We studied random and round-robin scheduling policies in the analogous model and then proposed a new scheduling policy based on combinatorial block designs. We showed that the proposed policy provides higher diversity in the queuing time across the copies of a job and, therefore, leverages the potential of redundancy. Finally, by simulating a queuing system we showed that the proposed scheduling policy is indeed superior to the other policies, in terms of queuing time. As a future work, the steady state analysis of queuing systems with bi-modal job service time distribution can be considered. Using other types of block designs for load balancing and/or diversification of jobs' service time can also be the subject of future studies.
\section*{Acknowledgment}
Part of this research is based upon work supported by the NSF grants No.\ CIF-1717314 and CCF-1559855. The authors would like to thank Dr. Philip Whiting for his helpful comments and D.~Burke, E.~Catania, M.~Akta\c{s}, and P.~Peng for useful discussions.
\bibliographystyle{IEEEtran}
\bibliography{ref}

\begin{thebibliography}{10}
\providecommand{\url}[1]{#1}
\csname url@samestyle\endcsname
\providecommand{\newblock}{\relax}
\providecommand{\bibinfo}[2]{#2}
\providecommand{\BIBentrySTDinterwordspacing}{\spaceskip=0pt\relax}
\providecommand{\BIBentryALTinterwordstretchfactor}{4}
\providecommand{\BIBentryALTinterwordspacing}{\spaceskip=\fontdimen2\font plus
\BIBentryALTinterwordstretchfactor\fontdimen3\font minus
  \fontdimen4\font\relax}
\providecommand{\BIBforeignlanguage}[2]{{%
\expandafter\ifx\csname l@#1\endcsname\relax
\typeout{** WARNING: IEEEtran.bst: No hyphenation pattern has been}%
\typeout{** loaded for the language `#1'. Using the pattern for}%
\typeout{** the default language instead.}%
\else
\language=\csname l@#1\endcsname
\fi
#2}}
\providecommand{\BIBdecl}{\relax}
\BIBdecl

\bibitem{dean2013tail}
J.~Dean and L.~A. Barroso, ``The tail at scale,'' \emph{Communications of the
  ACM}, vol.~56, no.~2, pp. 74--80, 2013.

\bibitem{joshi2014delay}
G.~Joshi, Y.~Liu, and E.~Soljanin, ``On the delay-storage trade-off in content
  download from coded distributed storage systems,'' \emph{IEEE Journal on
  Selected Areas in Communications}, vol.~32, no.~5, pp. 989--997, 2014.

\bibitem{joshi2017efficient}
G.~Joshi, E.~Soljanin, and G.~Wornell, ``Efficient redundancy techniques for
  latency reduction in cloud systems,'' \emph{ACM Transactions on Modeling and
  Perform.\ Eval.\ of Computing Systems (TOMPECS)}, vol.~2, no.~2, p.~12, 2017.

\bibitem{lee2017speeding}
K.~Lee, M.~Lam, R.~Pedarsani, D.~Papailiopoulos, and K.~Ramchandran, ``Speeding
  up distributed machine learning using codes,'' \emph{IEEE Transactions on
  Information Theory}, vol.~64, pp. 1514--1529, 2017.

\bibitem{he2010comet}
B.~He, M.~Yang, Z.~Guo, R.~Chen, B.~Su, W.~Lin, and L.~Zhou, ``Comet: batched
  stream processing for data intensive distributed computing,'' in
  \emph{Proceedings of the 1st ACM symposium on Cloud computing}.\hskip 1em
  plus 0.5em minus 0.4em\relax ACM, 2010, pp. 63--74.

\bibitem{bernardin2006using}
J.~Bernardin, P.~Lee, and J.~Lewis, ``Using execution statistics to select
  tasks for redundant assignment in a distributed computing platform,'' Aug.~15
  2006, uS Patent 7,093,004.

\bibitem{gardner2017redundancy}
K.~Gardner, M.~Harchol-Balter, A.~Scheller-Wolf, M.~Velednitsky, and
  S.~Zbarsky, ``Redundancy-d: The power of d choices for redundancy,''
  \emph{Operations Research}, vol.~65, no.~4, pp. 1078--1094, 2017.

\bibitem{ananthanarayanan2013effective}
G.~Ananthanarayanan, A.~Ghodsi, S.~Shenker, and I.~Stoica, ``Effective
  straggler mitigation: Attack of the clones,'' in \emph{Presented as part of
  the 10th $\{$USENIX$\}$ Symposium on Networked Systems Design and
  Implementation ($\{$NSDI$\}$ 13)}, 2013, pp. 185--198.

\bibitem{aktas2019straggler}
M.~F. Aktas and E.~Soljanin, ``Straggler mitigation at scale,'' \emph{arXiv
  preprint arXiv:1906.10664}, 2019.

\bibitem{behrouzi2018effect}
A.~Behrouzi-Far and E.~Soljanin, ``On the effect of task-to-worker assignment
  in distributed computing systems with stragglers,'' in \emph{2018 56th Annual
  Allerton Conference on Communication, Control, and Computing
  (Allerton)}.\hskip 1em plus 0.5em minus 0.4em\relax IEEE, 2018, pp. 560--566.

\bibitem{raaijmakers2018delta}
Y.~Raaijmakers, S.~Borst, and O.~Boxma, ``Delta probing policies for
  redundancy,'' \emph{Perform.\ Eval.\ Rev.}, vol. 127, pp. 21--35, 2018.

\bibitem{wang1985load}
Y.-T. Wang \emph{et~al.}, ``Load sharing in distributed systems,'' \emph{IEEE
  Trans.\ on Computers}, vol. 100, no.~3, pp. 204--217, 1985.

\bibitem{harchol1996exploiting}
M.~Harchol-Balter and A.~B. Downey, ``Exploiting process lifetime distributions
  for dynamic load balancing,'' in \emph{ACM SIGMETRICS Perform.\ Eval.\ Rev.},
  vol.~24, 1996, pp. 13--24.

\bibitem{bansal2001analysis}
N.~Bansal and M.~Harchol-Balter, \emph{Analysis of SRPT scheduling:
  Investigating unfairness}.\hskip 1em plus 0.5em minus 0.4em\relax ACM, 2001,
  vol.~29, no.~1.

\bibitem{wierman2005classifying}
A.~Wierman and M.~Harchol-Balter, ``Classifying scheduling policies with
  respect to higher moments of conditional response time,'' \emph{ACM
  SIGMETRICS Perform.\ Eval.\ Rev.}, vol.~33, pp. 229--240, 2005.

\bibitem{peng2015r}
B.~Peng, M.~Hosseini, Z.~Hong, R.~Farivar, and R.~Campbell, ``R-storm:
  Resource-aware scheduling in storm,'' in \emph{Proceedings of the 16th Annual
  Middleware Conference}.\hskip 1em plus 0.5em minus 0.4em\relax ACM, 2015, pp.
  149--161.

\bibitem{aktas2019learning}
M.~F. Aktas and E.~Soljanin, ``Learning effective straggler mitigation from
  experience and modeling,'' in \emph{CodML (ICML Workshop)}, 2019.

\bibitem{gupta2007analysis}
V.~Gupta, M.~H. Balter, K.~Sigman, and W.~Whitt, ``Analysis of
  join-the-shortest-queue routing for web server farms,'' \emph{Perform.\
  Eval.\ Rev.}, vol.~64, no. 9-12, pp. 1062--1081, 2007.

\bibitem{harchol1998choosing}
M.~Harchol-Balter, M.~E. Crovella, and C.~Murta, ``On choosing a task
  assignment policy for a distributed server system,'' 1998.

\bibitem{chen2010analysis}
Y.~Chen, A.~S. Ganapathi, R.~Griffith, and R.~H. Katz, ``Analysis and lessons
  from a publicly available google cluster trace,'' 2010.

\bibitem{gardner2017better}
K.~Gardner, M.~Harchol-Balter, A.~Scheller-Wolf, and B.~Van~Houdt, ``A better
  model for job redundancy: Decoupling server slowdown and job size,''
  \emph{IEEE/ACM transactions on networking}, vol.~25, no.~6, pp. 3353--3367,
  2017.

\bibitem{hokstad1978approximations}
P.~Hokstad, ``Approximations for the m/g/m queue,'' \emph{Operations Research},
  vol.~26, no.~3, pp. 510--523, 1978.

\bibitem{kimura1983diffusion}
T.~Kimura, ``Diffusion approximation for an m/g/m queue,'' \emph{Operations
  Research}, vol.~31, no.~2, pp. 304--321, 1983.

\bibitem{gupta2010inapproximability}
V.~Gupta, M.~Harchol-Balter, J.~Dai, and B.~Zwart, ``On the inapproximability
  of m/g/k: why two moments of job size distribution are not enough,''
  \emph{Queueing Systems}, vol.~64, no.~1, pp. 5--48, 2010.

\bibitem{gardner2015reducing}
K.~Gardner, S.~Zbarsky, S.~Doroudi, M.~Harchol-Balter, and E.~Hyytia,
  ``Reducing latency via redundant requests: Exact analysis,'' \emph{ACM
  SIGMETRICS Perform.\ Eval.\ Rev.}, vol.~43, pp. 347--360, 2015.

\bibitem{raab1998balls}
M.~Raab and A.~Steger, ``“balls into bins”—a simple and tight analysis,''
  in \emph{International Workshop on Randomization and Approximation Techniques
  in Computer Science}.\hskip 1em plus 0.5em minus 0.4em\relax Springer, 1998,
  pp. 159--170.

\bibitem{dubhashi1998balls}
D.~Dubhashi and D.~Ranjan, ``Balls and bins: A study in negative dependence,''
  \emph{Random Structures \& Algorithms}, vol.~13, no.~2, pp. 99--124, 1998.

\bibitem{michelen2019short}
M.~Michelen, ``A short note on the average maximal number of balls in a bin,''
  \emph{arXiv preprint arXiv:1905.08933}, 2019.

\bibitem{behrouzi2019average}
A.~Behrouzi-Far and D.~Zeilberger, ``On the average maximal number of balls in
  a bin resulting from throwing r balls into n bins t times,'' \emph{arXiv
  preprint arXiv:1905.07827}, 2019.

\bibitem{stinson2007combinatorial}
D.~R. Stinson, \emph{Combinatorial designs: constructions and analysis}.\hskip
  1em plus 0.5em minus 0.4em\relax Springer Science \& Business Media, 2007.

\bibitem{catania}
E.~Catania and D.~Burke, ``Scheduling with bibd,''
  http://reu.dimacs.rutgers.edu/~ecatania/, 2018.

\end{thebibliography}

\end{document}